\newtheorem{theorem}{Theorem}[section]
\newtheorem{proposition}[theorem]{Proposition}
\newcommand{\cu}[1]{
	\ifcat\noexpand#1\relax
	\bm{#1}
	\else
	\mathbf{#1}
	\fi
}
\titleformat{\abstract}[runin]
	{\normalfont\bfseries}
	{\thesection.}{.2em}{}[.]
\titleformat{\section}[runin]
	{\normalfont\bfseries}
	{\thesection.}{.2em}{}[.]
\titleformat{\subsection}[runin]
	{\normalfont\bfseries}
	{\thesubsection.}{.2em}{}[.]
\newcommand{\R}{\mathbb{R}}
\newcommand{\E}{\mathbb{E}}
\newcommand{\PP}{\mathbb{P}}
\newcommand{\dirp}{\mathrm{DP}}
\newcommand{\dd}{\mathrm{d}}
\newcommand{\dnormal}{\mathrm{N}}
\newcommand{\dbeta}{\mathrm{Be}}
\newcommand{\iid}{\stackrel{\mathrm{i.i.d}}{\sim}}
\newcommand{\ind}{\stackrel{\mathrm{ind}}{\sim}}
\newcommand{\dotind}{\stackrel{\mathrm{ind}}{{.}\sim}}
\newcommand{\indic}{\mathbf{1}}
\title{\bf{System identification using Bayesian neural networks with nonparametric noise models}}
\author[1]{Christos Merkatas\thanks{Corresponding author: \href{christos.merkatas@aalto.fi}{christos.merkatas@aalto.fi}}}
\author[1]{Simo S\"arkk\"a}
\affil[1]{Department of Electrical engineering and Automation, Aalto University, Espoo 02150, Finland}
\begin{document}

\maketitle

\begin{abstract}    
System identification is of special interest in science and engineering. This article is concerned with a system identification problem arising in stochastic dynamic systems, where the aim is to estimate the parameters of a system along with its unknown noise processes. In particular, we propose a Bayesian nonparametric approach for system identification in discrete time nonlinear random dynamical systems assuming only the order of the Markov process is known. The proposed method replaces the assumption of Gaussian distributed error components with a highly flexible family of probability density functions based on Bayesian nonparametric priors. Additionally, the functional form of the system is estimated by leveraging Bayesian neural networks which also leads to flexible uncertainty quantification. Asymptotically on the number of hidden neurons, the proposed model converges to full nonparametric Bayesian regression model. A Gibbs sampler for posterior inference is proposed and its effectiveness is illustrated on simulated and real time series.  \newline
\\
{\noindent \bf Keywords:} Bayesian nonparametrics; System identification; Infinite mixture models; Nonlinear time series; Geometric stick breaking
\end{abstract}

\section{Introduction}
\label{intro}
Random dynamical systems, that is, dynamical systems that evolve through the presence of noise, play an important role in science and engineering as they can be used to model and explain dependencies in physical phenomena changing in time. This noise may be observational, attributed to measurement errors from the measuring device, or dynamical which compensates for modeling error or both. Common areas of application include but are not limited to neuroscience, finance \citep{franses2000non, small2005applied, ozaki2012time}, and signal processing \citep{tse2003chaos}. 

In this article, we consider dynamical systems that are observed in the form of a time series and the mathematical model describing the process is given by
\begin{equation}\label{eq:general_model}
    y_t = g(y_t^{\rho};\theta) + z_t,\quad t=1,2,\ldots
\end{equation}
where $y_t$ is the observed process and the vector $y_t^{\rho}:=(y_{t-1},\ldots, y_{t-\rho})$ consists of a set of lagged versions of the process $y_t$ which allows the system to be written in the form of a regression model. Additionally, $\theta$ is a vector of parameters $\theta:=(\theta_1,\ldots, \theta_m)$  and $\{z_t\}_{t\geq 1}$ is a collection of i.i.d. zero-mean noise components with finite variance. For simplicity, we assume that we have at our disposal a time series of length $n$ generated from \eqref{eq:general_model} and that the vector of $\rho$ initial states is known.
    
System identification refers to the problem of accurate estimation of $\hat{g}$ or, equivalently, $\hat{\theta}$ that describes the observed process well enough in order to understand the dynamics and use it for control and/or prediction of future unobserved values of the system \citep{sjoberg1995nonlinear, ljung1999system, nelles2001nonlinear,keesman2011system}. System identification is typically done with some statistical method for estimating the parameters based on the available data. 
  
For a discrete-time scalar time series $\{y_t\}_{t\geq 1},$ the Box--Jenkins \citep{box2015time} modeling approach is to assume an autoregressive integrated moving average $\mathrm{ARIMA}(p,d,q)$ model for $\{y_t\}.$ In this framework, a non-stationary time series is differentiated $d$ times to become stationary and then a suitable model for representing the process $y_t$ is $\mathrm{\Omega}_p(B)y_t = \mathrm{\Psi}_q(B)z_t,$ where 
    \begin{align}
    	\mathrm{\Omega}_p(B) &= 1 - \omega_1 B - \cdots -\omega_p B^p,\nonumber\\
	\mathrm{\Psi}_q(B) &= 1 - \psi_1 B - \cdots -\psi_q B^q
    \end{align}
  are polynomials of degree $p,q$ respectively of the backshift operator $B,$ with $B^\rho y_t = y_{t-\rho}.$ The polynomial degrees, and the coefficients $\theta=(\omega_1,\ldots,\omega_p, \psi_1,\ldots,\psi_q)$ of the polynomials are estimated from the observed data. 
    These models work extremely well for linear Gaussian time series, however, they are not sufficient for nonlinear or chaotic time series modeling. To this end, more sophisticated, nonlinear models for the function $g$ have been developed. 
 Some of the popular choices include, among others, the bilinear model, the threshold autoregressive model along with the extension to self-exciting threshold autoregressive model \citep{tong1980threshold} and the autoregressive conditional heteroscedasticity (ARCH) model \citep{engle1982autoregressive}. Moreover, black-box nonlinear models using artificial neural networks have been successfully used for nonlinear time series estimation and prediction
\citep{chen1992neural, sjoberg1994neural, sjoberg1995nonlinear, zhang2003time, aladag2009forecasting}.
  
    The so-called Bayesian approaches \citep{gelman2013bayesian} incorporate prior information in terms of probability distributions to the estimation process. Bayesian methods
    have been used to estimate the parameters of maps responsible for the generation of chaotic signals contaminated with Gaussian noise \citep{luengo2002bayesian, meyer2000bayesian, pantaleon2000bayesian, pantaleon2003bayesian}. In all cases above, the general functional form of the system in question is known and the aim is to estimate the control parameters. Bayesian methods have also been used \citep{pantaleon2003bayesian} to estimate piecewise Markov systems. Slightly similarly, in the context of Bayesian estimation but, regarding dynamically noise-corrupted signals and taking a black-box approach, \citep{matsumoto2001reconstructions, nakada2005bayesian} developed hierarchical Bayesian methods for the reconstruction and prediction of random dynamical systems. More recently, there has been increasing research interest in the development of Gaussian process \citep{ki2006gaussian} models for black-box time series modeling \citep{kocijan2005dynamic,turner2009system,bijl2017system, sarkka2018gaussian}.        
     
     A common feature of most of the aforementioned methods is that they are based on the rather strong assumption of Gaussian noise which is rarely true in real world applications and is typically chosen for computational convenience. Other choices, based on scaled mixtures of Gaussians have also been proposed \citep{ishwaran2005spike, polson2010shrink, carvalho2010horseshoe} however, these choices are still parametric capturing only a bounded amount of information in the data. On the other hand, restricting $g$ to belong to a particular class of functions might be difficult when there is not enough information for the problem at hand, leading to models with poor generalization capabilities.
     
    In this paper, we aim to develop hierarchical Bayesian models for the estimation and prediction of random dynamical systems from observed time series data. In particular, we use a nonparametric Bayesian prior namely, the geometric stick breaking process (GSB) prior introduced by \cite{fuentes2010new} which allows for flexible modeling of unknown distributions in order to drop the common assumption of Gaussian distributed noise components. Bayesian nonparametric modeling of the noise process has been successfully attempted before \citep{hatjispyros2009bayesian, merkatas2017bayesian, hatjispyros2019joint} in the context of dynamical reconstruction from observed time series data. Our model is an extension of the model proposed at \cite{merkatas2017bayesian} where we additionally aim to adopt a black-box modeling approach that does minimal assumptions for the data at hand. In particular, we drop the assumption of a known functional form of the deterministic part of the system by leveraging Bayesian neural networks \citep{neal2012bayesian}. We parametrize the data generating mechanism with a feed-forward neural network whose weights and biases are assigned a prior distribution. In the limit of infinite hidden neurons a priori, this neural network converges to a Gaussian process prior \citep[cf.][]{neal2012bayesian} making the approach completely nonparametric. This makes our model flexible and suitable for modeling real time series data, requiring only the order of the underlying Markov process to be known. Recent advances in embedding dimensions and signature learning \citep{fermanian2021embedding} may provide additional information on how to select the lag in the inferential process but this is beyond the scope of this paper.
    
 It should be stressed that a purely Bayesian nonparametric approach to the problem described here would be a density regression approach. In density regression, the functional form of the system and the noise process are assumed unknown and the use of the so called dependent nonparametric priors is required \citep{fuentes2009nonparametric, hatjispyros2018dependent, lijoi2014dependent, mena2011geometric}. However, the construction of such priors can be hard especially for continuous time series modeling because dependencies among the random probability measures changing in time would require analytical solutions to stochastic differential equations or, at least, efficient computational methods. Our framework can be extended to continuous time modeling using more sophisticated neural network architectures, for example, neural ordinary differential equations \citep{chen2018neural}.
    
    The paper is organized as follows. Section~\ref{sec:background} introduces the necessary background on Bayesian nonparametric models and Bayesian neural networks. The proposed model for estimation and prediction of random dynamical systems is derived in Section~\ref{sec:the-model}. Next, in Section~\ref{sec:sampler} we introduce a fast MCMC sampling algorithm for posterior inference based on a Gibbs kernel and we show how the samples taken from the posterior can be used for prediction of future values. Experimental evaluation in simulated and real data is given in Section~\ref{sec:experiments}. Finally, we conclude and discuss the results in Section~\ref{sec:discussion}.

\section{Background} \label{sec:background} 
    In this section we provide the necessary background for the building blocks of the proposed model. We start with the review of Bayesian neural networks \citep{neal2012bayesian} and the Hamiltonian Monte Carlo (HMC) method for posterior inference \citep{kennedy1988hybrid} which is crucial for the update of the neural network weights within our Gibbs sampler. For completeness, an introduction to nonparametric priors \citep{hjort2010bayesian} and their application in mixture modeling is given in Section~\ref{subsec:bnp}. 
    
    \subsection{Bayesian neural networks} \label{subsec:bnn}
        Feed-forward neural networks are popular models for estimating functions from data
        due to their universal approximation property \citep{cybenko1989approximation}. For our purposes we focus on a feed-forward neural network with $L$ layers where each layer consists of $N_l,\, 0\leq l \leq L$ neurons. The parameters of the neural network are given by a set of weight matrices $W^{(l)}:=(W^{(l)}_{ij})^{i=1:N_l}_{j=1:N_l-1}$ and bias vectors $b^{(l)}:=(b^{(l)}_{i})_{i=1:N_l}$ collectively denoted by $\theta,$ that is, $\theta=\{W^{(l)}, b^{(l)}\}_{l=1}^L.$ The network computes functions $y=g(x;\theta)$ by forward-propagating the inputs as follows. In each layer $l=1,\ldots,L$ the network computes the affine transformation $a^{(l)}:=W^{(l)}x^{(l-1)} + b^{(l)}$ to its input and applies an elementwise nonlinearity as $x^{(l)}:=h^{(l)}(a^{(l)})$ where, $x:=x^{(0)}$ and $y:=x^{(L)}.$ 
        Typical choices for the nonlinear function $h$ include the hyperbolic tangent, the logistic function, and more recently, the rectified linear unit function \citep{Goodfellow:2016}. Since we are going to use the network for regression tasks, the function at the output layer $h^{(L)}$ is chosen to be the linear function while for $l<L$ we choose the $\tanh$ activation function.
        
        The neural network becomes \textit{Bayesian} \citep{mackay1992practical, nowlan1992simplifying,neal2012bayesian} when a prior distribution is assigned on the parameters $\theta$ and predictions are made using the posterior predictive distribution of the model. 
        The Bayesian neural network can be written in hierarchical fashion as 
            \begin{align}
                    & y_t \mid x_t, \theta, \lambda \sim \dnormal(y_t\mid g(x_t;\theta), \Lambda^{-1}), \quad 1\leq t \leq n,\\
                    & \theta \sim p(\theta \mid \tau),\,\tau \sim p(\tau),\,\Lambda \sim p(\Lambda), \label{eq:hyper-params}
            \end{align}
            where $(x_t,y_t),\,1\leq t \leq n$ are the training  data, $\Lambda$ is the precision matrix, and $\tau$ is a hyperparameter for the weights. Bayesian inference proceeds by sampling from the full posterior distribution given by
            \begin{align}
               & p(\theta, \Lambda, \tau \mid \{(x_t,y_t)\}_{t=1}^n) \propto p(\Lambda)p(\tau)p(\theta\mid\tau)\prod_{t=1}^n \dnormal(y_t\mid g(x_t; \theta), \Lambda^{-1}).\label{eq:nn_posterior}
            \end{align}
      It is possible to sample the parameters of a Bayesian model via MCMC. As we will see later, in Sect. \ref{subsec:mcmc}, we can sample the parameters of our model by using a combination of MCMC methods namely, Gibbs \citep{geman1984stochastic} and Hamiltonian Monte Carlo (HMC) \citep{kennedy1988hybrid}. In particular, HMC will be used to sample from the full conditional distribution of the neural network parameters $p(\theta\mid \cdots)$ which for now we denote by $p^{|{\cal D}}(\theta).$ 
      After Neal's seminal work \citep{neal2012bayesian}, a sample for the weights is typically obtained from $p^{|{\cal D}}(\theta)$ via HMC \citep{neal2011mcmc} due to its complex geometry. 
      
      In HMC, a new sample is proposed in the parameter space using gradient information instead of simple random walk proposals. Introducing auxiliary variables $\eta$, where $\dim{\eta}=\dim{\theta},$ an ergodic Markov chain with stationary distribution the desired posterior distribution $p(\theta, \eta) = p^{|{\cal D}}(\theta)p(\eta)$ is constructed. This posterior can be seen as the canonical distribution of the joint state $(\theta, \eta)$ corresponding to the Hamiltonian energy function $H(\theta, \eta) = U(\theta) + T(\eta) = -\log p^{|{\cal D}}(\theta) + \frac{1}{2} \eta^\top M^{-1}\,  \eta,$ where $M$ is a positive definite mass matrix.
            
      Given the current state $(\theta, \eta)$ of the sampler, an HMC transition is given by first simulating new variables $\eta'\sim p(\eta)$. Then, the new $\theta'$ is proposed simulating a trajectory according to Hamilton's equation of motion $(\Dot{\theta}, \Dot{\eta})= (\partial_\eta H, -\partial_\theta H)$ using leapfrog integrator, a symplectic integrator which alternates between the following updates
            \begin{align}
                \eta(\kappa + \epsilon/2) &= \eta(\kappa) - \frac{\epsilon}{2} \partial_\theta H(\theta(\kappa), \eta(\kappa))\nonumber\\
                \theta(\kappa + \epsilon) &= \theta(\kappa) + \epsilon\partial_\eta H(\theta(\kappa), \eta(\kappa + \epsilon/2))\nonumber\\
                \eta(\kappa + \epsilon) &= \eta(\kappa + \epsilon/2) - \frac{\epsilon}{2} \partial_\theta H(\theta(\kappa+\epsilon), \eta(\kappa + \epsilon/2)),\nonumber
            \end{align}
      where $\epsilon$ is the discretization step length, $S$ is the number of steps giving a trajectory of length $\epsilon S$, and $\kappa$ is the time variable.

    To account for trajectory simulation errors, a Metropolis acceptance step is included and the proposed state is accepted with probability
    \begin{equation}
    \mathrm{acc}(\theta',\theta)= \min \{1, p(\theta', \eta') / p(\theta, \eta)\}.
    \end{equation}
    
    \subsection{Bayesian nonparametrics} \label{subsec:bnp}
        Bayesian nonparametrics dates back to introduction of Dirichlet processes (DPs) \citep{ferguson1973bayesian} but has been widely used only after the development of Markov chain Monte Carlo (MCMC) \citep{brooks2011handbook} methods. Their main advantage in statistical and machine learning applications is  their ability to adapt their complexity according to the available data, using a potentially infinite number of parameters.  
        In this section we provide some background material on Bayesian nonparametric priors and their application on mixture modeling. 

        A general random discrete distribution on the space $(\Xi, \sigma(\Xi))$  where $\sigma(\Xi)$ is the Borel $\sigma$-algebra of subsets of $\Xi$ can be represented as
        \begin{equation}
            \mathbb{P} = \sum_{k=1}^{K}\pi_k\delta_{\xi_k},\label{eq:discrete-distribution}
        \end{equation}
        such that $\sum_{k=1}^{K}\pi_k=1 \textrm{ a.s.}$ and $(\xi_k)\iid P_0,$ where $P_0$ is a non-atomic $\Xi-$valued measure. Allowing $K=\infty,$ standard results \citep{ongaro2004discrete} can be used to show that random probability measures of the form given in \eqref{eq:discrete-distribution} have full support on the space $\mathcal{P}_{\Xi}$, the space of all probability measures defined on $\Xi.$ These random distributions can be used in the context of mixture models to define nonparametric mixtures. Convoluting the kernel of a parametric family with a random discrete distribution we can define mixture densities as
        \begin{align}
            f(x \mid \PP) = \int_{\Xi} K(x\mid \xi) \PP(\dd \xi),\quad \PP(\dd \xi)= \sum_{k=1}^{K}\pi_k\delta_{\xi_k}(\dd \xi).\label{eq:mixture-generic}\nonumber
        \end{align}
        Depending on the mechanism that the weights $\pi_k$ are constructed and if $K=\infty$ or not, different nonparametric mixture models can be recovered. 
        Specifically, choosing $\PP\sim \dirp(c, P_0)$ to be the Dirichlet process with mass $c>0$ and base measure $P_0$
        \citep{sethuraman1994constructive}, that is,
        \begin{align}
            &\PP= \sum_{k=1}^{\infty} \pi_k \delta_{\xi_k}, \quad \xi_k \iid P_0, \\
            &\pi_k = \phi_j \prod_{j<k} (1 - \phi_j), \quad \phi_1,\phi_2\ldots \iid \dbeta(1,c),\label{eq:stick-breaking-weights}
        \end{align}
        we recover the Dirichlet process mixture (DPM) model \citep{antoniak1974mixtures, escobar1995bayesian} which is commonly used for density estimation problems. 
        
        Random probability measures, with less exotic weights can be obtained if we consider general random discrete distributions. In particular, \cite{ishwaran2001gibbs}, introduced a general class of stick breaking distributions  defined for $\phi_k \sim \dbeta(a_k,b_k)$ as in Equation~\eqref{eq:stick-breaking-weights}, for $a_k,b_k$ such that
         $\sum_{k=1}^{\infty}\log\E\left(1 - \phi_k\right) = -\infty$ \citep{ghosal2017fundamentals}. 
         
        A rather interesting random discrete distribution is the geometric stick breaking process (GSB) \citep{fuentes2010new} which corresponds to a single beta random variable $\phi\sim\dbeta(a,b).$
        \begin{align}
            &\PP= \sum_{k=1}^{\infty} \pi_k \delta_{\xi_k}, \quad \xi_k \iid P_0, \label{eq:gsb-measure} \\
            &\pi_k = \phi(1-\phi)^{k-1}, \quad \phi \sim \dbeta(a,b).\label{eq:gsb-weights}
        \end{align}
        This random probability measure has full support on the space $ \mathcal{P}_{\Xi},$ and in contrast to the Dirichlet process, its weights are almost surely decreasing. Later on, we will model the density of the noise components as a GSB mixture density.
    
        Inference in mixture models proceeds with the introduction of a clustering variable for each observation that indicates the component of the mixture that the specific observation belongs to. However, the state space of the clustering variables is now countably infinite and additional auxiliary variables that make the sum finite have to be introduced (see for example, \cite{walker2007sampling, kalli2011slice} for the DPM model and beyond). We will provide details on this when we will construct the transition density for the observations of our model which we now introduce in Section~\ref{sec:the-model}. Nevertheless, the decreasing nature of the weights allows for faster sampling of the clustering variables leading to faster and less complicated sampling algorithms in contrast to the Dirichlet process counterpart.
    
\section{Autoregressive Bayesian neural networks for time series modeling}\label{sec:the-model}
    In this section we present the Bayesian neural network model with a nonparametric noise component. We start with the development of a nonlinear autoregressive model in Section~\ref{subsec:nlar} and gradually extend it into a fully nonparametric method by introducing Bayesian neural networks as function approximators for the deterministic part and nonparametric noise components based on GSB mixtures.
    
    \subsection{Nonlinear autoregressive Bayesian neural network model}\label{subsec:nlar}
    In this work we consider a discrete time dynamical system 
    \begin{equation}\label{eq:recurrence_eq0}
        y_t = g(y_{t}^{\rho},\theta) + z_t,\quad 1\leq t\leq n,
    \end{equation}
    where $g:\R^{D\times\rho}\times \Theta\to \R^D$ is a nonlinear function of $y_{t}^{\rho}:=(y_{t-1},\ldots,y_{t-\rho})$ parametrized by a vector of control parameters $\theta\in\Theta\subset \R^{m}.$ The random variables $z_t$ represent additive dynamical noise which for now we assume are i.i.d. samples from an unknown density $f(\cdot)$ taking values in $\R^D.$ 

In order to be able to capture nonlinearities in the time series with our model, a sufficiently expressive function should be chosen for $g(\cdot,\theta).$ To this end, we model the function responsible for the generation of the time series $g(\cdot, \theta)$ where $\theta=\{W^{(l)}, b^{(l)}\}_{l=1}^L,$ with a Bayesian neural network of $L$ layers as described in Section~\ref{subsec:bnn}. The prior over the parameters in $\theta$ is taken to be Gaussian parametrized by precision $\boldsymbol{\tau}.$ In particular, we tie each group of weights and biases in a specific layer to their own precision $\tau_W^{(l)},\tau_b^{(l)}$ so that for $1\leq l \leq L:$
        \begin{align}
            W^{(l)} \dotind\dnormal(0,[\tau_{W}^{(l)}]^{-1}), \quad & b^{(l)} \dotind \dnormal(0,[\tau_{b}^{(l)}]^{-1}),
        \end{align}
         where $v\dotind \nu$ denotes independently $\nu-$distributed elements $v_i$ of the vector/matrix $v.$
        The precision parameters $\tau_{\bullet}^{(l)}$ control how much the weights/biases can vary in group $l$ and are further assigned Gamma hyperprior distributions $ \mathrm{Ga}(\tau_{\bullet}^{(l)}\mid\alpha_{\tau_{\bullet}^{(l)}},\beta_{\tau_{\bullet}^{(l)}})$ with mean $\alpha_{\tau_{\bullet}^{(l)}}/\beta_{\tau_{\bullet}^{(l)}}.$ 
        This way a hierarchical model for the weights and biases between layers is defined and information can be shared among different layers, a concept known as ``borrowing of strength'' \citep{gelman2013bayesian,neal2012bayesian}. 

In the next section we adopt a Bayesian nonparametric mixture modeling approach for the density of the noise components. In particular, we aim to model the density of the noise components as an infinite mixture of Gaussian kernels based on the GSB process. This approach is nonparametric and is equivalent with placing a prior over the space of densities. Under the proposed setting, the model presented in \cite{merkatas2017bayesian} is extended to a full nonparametric Bayesian model for system identification. Thus, the only necessary information is the order of the Markov process given in Equation~\eqref{eq:recurrence_eq0} which allows us to form the regressors $y^{\rho}_t.$ 

    \subsection{Nonparametric noise process}
 
        Consider a general discrete random probability measure $\PP  = \sum_{k=1}^{\infty}\pi_k \delta_{\Lambda_k}$ defined as in Equation~\eqref{eq:discrete-distribution}.
        The density of each component $z_t$ of the noise process can be modeled as an infinite mixture of (multivariate) Gaussian kernels
        \begin{align}
            &f(z_t \mid \PP) = \int_{\mathcal{M}^+} \dnormal(z_t\mid \boldsymbol{0}, \Lambda^{-1}) \,\PP(\dd \Lambda)= \sum_{k=1}^{\infty} \pi_k\, \dnormal(z_t\mid \boldsymbol{0}, \Lambda_k^{-1}).\label{eq:noise-mixture}
        \end{align}
        In the equation above, $\mathcal{M}^+$ denotes the set of positive semi-definite $D\times D$ precision matrices defined on $\R,$ and $\boldsymbol{0}$ is the $D$-dimensional zero vector. 
        From the additivity of the errors, the transition density for the observations reads
        \begin{align}
            f(y_t \mid y^{\rho}_t, \theta,(\pi_k)_{k\geq 1}, (\Lambda_k)_{k\geq 1}) = \sum_{k=1}^{\infty} \pi_k \,\dnormal(y_t\mid g(y^{\rho}_t,\theta), \Lambda_k^{-1}). \label{eq:transition}
        \end{align}
        To avoid cluttering up the notation, at this point we adopt a simpler notation for $f,$ namely from now on we denote $f(y_t \mid y^{\rho}_t, \theta,(\pi_k)_{k\geq 1} (\Lambda_k)_{k\geq 1})$ by $f(y_t).$
        Then, based on a sample of size $n$ the likelihood reads
        \begin{align}
            &f(y_1, \ldots, y_{n})
             = \prod_{t=1}^{n} \sum_{k=1}^{\infty} \pi_k \dnormal(y_t\mid g(y^{\rho}_t,\theta), \Lambda_k^{-1}).\label{eq:likelihood}
        \end{align}
      
        Inference in mixture models proceeds with the introduction of a clustering variable $d_t$ for each observation that indicates the component of the mixture that $y_t$ came from. For the prior model, the clustering variables have an infinite state space and each $d_t=k$ is selected with probability $\pi_k,$ that is, $\mathrm{pr}(d_t=k) = \pi_k.$ Any MCMC sampling technique would require an infinite number of steps to sample the $d_t$'s. However, introducing specific type of auxiliary variables, we can make the sum finite \citep{fuentes2010new}. In particular, for each observation, we introduce the random variable $R_t,$ an almost surely finite discrete random variable with probability function $f_R(\,\cdot\mid \phi)$ parametrized by $\phi.$  Conditionally on $R_t$ we force the clustering variables to attain a discrete uniform distribution, that is, $\mathrm{pr}(d_t = k \mid R_t = r) = r^{-1}\indic(\{1,\ldots,r\}).$       
         Augmenting the random density for one observation with $d_t,R_t,$  Equation~\eqref{eq:transition} reads
        \begin{align}
            f(y_t, R_t=r, d_t=k) &= f_R(r \mid \phi) \,f(d_t=k \mid R_t = r)\, f(y_t\mid d_t = k) \label{eq:augmentation}\\
            &=f_R(r \mid \phi)\,\frac{1}{r}\, \indic(r\geq k)\,\dnormal(y_t\mid g(y^{\rho}_t,\theta), \Lambda_k^{-1}). \label{eq:augmented}
        \end{align}
        Marginalizing w.r.t. $(d_t,R_t)$ it is that
        \begin{align}
            f(y_t) & = \sum_{k=1}^\infty \sum_{r=k}^\infty f_{R}(R_t=r\mid \phi)\,r^{-1}\,\dnormal(y_t\mid g(y^{\rho}_t,\theta), \Lambda_k^{-1})\nonumber\\
            & = \sum_{k=1}^\infty \pi_k\, \dnormal(y_t\mid g(y^{\rho}_t,\theta), \Lambda_k^{-1}),
        \end{align}
        where 
        \begin{align}
            \pi_k = \sum_{r=k}^\infty \frac{1}{r}f_R(R_t=r\mid \phi).
        \end{align}
        Depending on the choice of $f_R$ we can construct different random probability measures with decreasing weights. In particular, choosing $f_R(R_t\mid \phi)=\mathrm{Nbin}(R_t=r\mid 2, \phi),\,\phi\in(0,1)$ the weights become geometric
        \begin{align}
            \pi_k &=  \sum_{r=k}^\infty \frac{1}{r} r \phi^2 (1 - \phi)^{r-1}
             =  \phi(1-\phi)^{k-1}.
        \end{align}
        It can be shown \citep{de2020inferential} that negative-binomial distributions with parameter $3$ lead to weights which have the form $\pi_k = \phi(1-\phi)^{k-1}(1+k\phi)/2.$ The choice of Poisson distribution with parameter $\phi>0$, shifted to start at 1, would give  
        \begin{align}
            \pi_k &= \frac{\Gamma(k) - \Gamma(k, \phi)}{\phi\,\Gamma(k)}
        \end{align}
        where $\Gamma(\cdot), \Gamma(\cdot,\cdot)$ are the Gamma and upper incomplete Gamma functions, respectively. 
        
    \subsection{Nonlinear model}
       For our purposes, we choose $f_R(R_t)=\mathrm{Nbin}(R_t=r\mid 2, \phi)$ for the auxiliary variables. Under this choice, and this will make the steps of the Gibbs sampler easier to understand, we can write the proposed nonlinear model in hierarchical fashion as
        \begin{align}
        &    y_{t}\mid y^{\rho}_{t}, \theta, d_t=k, \Lambda \ind \dnormal(y_t\mid g(y^{\rho}_t,\theta), \Lambda_k^{-1}) \nonumber\\
        &W^{(l)} \dotind \dnormal(0, [\tau_{W}^{(l)}]^{-1}),\,\,b^{(l)} \dotind \dnormal(0, [\tau_{b}^{(l)}]^{-1}),\, 1\leq l \leq L\nonumber\\
        &\tau_{W}^{(l)} \ind \mathrm{Ga}(\alpha_{\tau_{W}^{(l)}},\beta_{\tau_{W}^{(l)}}),\,\,\tau_{b}^{(l)} \ind \mathrm{Ga}(\alpha_{\tau_{b}^{(l)}},\beta_{\tau_{b}^{(l)}}),\, 1\leq  l \leq L\nonumber\\
        & d_t = k \mid R_t = r \ind \mathrm{DiscreteUnif}\{1,\ldots,r\}\nonumber\\
        & R_t \iid \mathrm{Nbin}(2,\phi),\quad \phi\sim \dbeta(a_\phi,b_\phi)\nonumber\\
        &\pi_k = \phi(1-\phi)^{k-1}, \quad \Lambda_k \iid P_0,\quad 1\leq k \leq R^*,\nonumber
        \end{align}
        where $R^*:=\max_t \{R_t\}.$
         The hierarchical Bayesian model defined above, is a highly flexible regression model in the sense that a \textit{nonparametric prior} is defined on the space of all possible noise distributions while the functional form of the system is estimated from the \textit{Bayesian neural network}. We will call the model with those properties \textit{NP-BNN} from now on.  In particular, for the NP-BNN the following proposition holds:
        
        \begin{proposition}\label{rem:convergence-to-nonparametric-gp}
            Let $g(x,\theta)$ be a neural network with one hidden layer of size $N_1.$ Assume for the vector $\theta=(W^{(1)},b^{(1)})$ that  $b^{(1)}\iid \dnormal(0, \sigma_{b}^2)$ and the elements of $W^{(1)}\iid \dnormal(0, \sigma_w^2).$ As $N_1\to \infty$ the model converges to a full nonparametric Bayesian regression model.
        \end{proposition}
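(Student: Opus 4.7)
The plan is to invoke Neal's classical argument that a one-hidden-layer Bayesian neural network with i.i.d. Gaussian priors on its parameters converges in distribution, as the hidden width tends to infinity, to a Gaussian process on the input space. A Gaussian process with a non-degenerate kernel is a nonparametric prior over the function space (placing positive mass on every open neighborhood of every continuous function on compacta, under mild conditions on the activation), so combining it with the GSB mixture prior on the noise yields a full nonparametric Bayesian regression model.

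To carry this out, I would first write the network explicitly, for a scalar output and with the standard Neal scaling on the output layer (denote $W^{(2)}_j \iid \dnormal(0,\sigma^2_{W^{(2)}}/N_1)$ and $b^{(2)}\sim \dnormal(0,\sigma^2_{b^{(2)}})$), as
\begin{align}
g(x,\theta) \;=\; b^{(2)} + \sum_{j=1}^{N_1} W^{(2)}_j \,\tanh\!\bigl(W^{(1)\top}_j x + b^{(1)}_j\bigr),\nonumber
\end{align}
and set $\phi_j(x) := W^{(2)}_j \tanh\!\bigl(W^{(1)\top}_j x + b^{(1)}_j\bigr)$. The summands $\phi_j$ are i.i.d.\ across $j$ by independence of the priors; since $\tanh$ is bounded, all moments are finite and $\mathbb{E}[\phi_j(x)]=0$ by symmetry of the Gaussian parameters.

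Next, for any finite set of inputs $x_1,\ldots,x_m$, I would apply the multivariate central limit theorem for triangular arrays to the i.i.d.\ vectors $(\phi_j(x_1),\ldots,\phi_j(x_m))_{j=1}^{N_1}$. The resulting weak limit is a mean-zero Gaussian vector with covariance
\begin{align}
K(x,x')= \sigma^2_{W^{(2)}}\,\mathbb{E}\!\left[\tanh(W^{(1)\top}x + b^{(1)})\,\tanh(W^{(1)\top}x' + b^{(1)})\right] + \sigma^2_{b^{(2)}},\nonumber
\end{align}
so the prior on $g$ converges in finite-dimensional distributions to a mean-zero Gaussian process with kernel $K$. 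Since such Gaussian processes (with $\tanh$-derived kernel) are dense in the space of continuous functions on compacta in the weak sense, the limit is a nonparametric prior over $g$, and the overall regression model becomes fully nonparametric.

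The main technical obstacle is that the statement of the proposition, as written, does not make the output-layer scaling explicit: without the $1/N_1$ variance on the $W^{(2)}_j$ (equivalently, a $1/\sqrt{N_1}$ normalization in front of the sum), either $g(x)$ degenerates to its mean or its variance diverges, so this assumption must be added for the CLT to yield a nontrivial Gaussian limit. A secondary, more delicate, issue is that the CLT yields only convergence of finite-dimensional marginals; upgrading to weak convergence of the random function in $C(K)$ for compact $K\subset\R^\rho$ additionally requires a tightness argument, which however follows from the boundedness of $\tanh$ via standard Kolmogorov-type moment bounds. For the purposes of Bayesian regression, convergence of finite-dimensional distributions is the essential property and suffices for the claim.
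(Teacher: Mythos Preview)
Your proposal is correct and follows essentially the same route as the paper: both invoke Neal's convergence of wide one-hidden-layer Bayesian networks to a Gaussian process, then combine this with the nonparametric GSB noise model to conclude the regression is fully nonparametric. You are more careful than the paper about the CLT mechanics and, in particular, you correctly flag the need for the $1/N_1$ output-layer variance scaling, which the paper's statement and proof leave implicit; the only ingredient the paper adds that you omit is the observation that, conditional on the clustering variables $\mathbf{d}$, the limiting model is a heteroskedastic GP regression with diagonal noise $\Sigma_{\mathbf{d}}$.
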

        
        \begin{proof}
            The prior on the density of the noise components is clearly nonparametric as an infinite mixture of Gaussian kernels convolved with a random probability measure $\PP\sim \Pi$ where $\Pi$ is the geometric stick breaking process with support the space $\mathcal{P}_{\mathcal{M}^+},$ the space of all probability measures defined on $\mathcal{M}^+.$ We note here that this is true for any random probability measure of the class considered in \cite{ongaro2004discrete}.
            
            To show that the regression function becomes nonparametric a priori, the proof is similar to Neal's proof for the convergence of functions computed by Bayesian neural networks to Gaussian processes \citep{neal2012bayesian}.
            Specifically, under the Gaussian prior for the weights  the following holds
            \begin{equation}
                g(y^{\rho}_t,\theta) \sim \mathrm{GP}\left(\boldsymbol{0}, K(y_{t}^{\rho},y_{t'}^{\rho})\right), \quad \text{while } N_1\to \infty,
            \end{equation}
            where $K(x,x')=\sigma_{b}^2+\sigma_w^2\E\left[h_{j}^{(1)}(x)\, h_{j}^{(1)}(x')\right]$ which is the same for $1\leq j \leq N_1.$ Here, $h_j^{(1)}$ is the activation function corresponding to the $j$th neuron in the hidden layer. 
            
            Now, consider for simplicity  scalar observations $y_t$ and a given configuration of the clustering variables $\mathbf{d}:=(d_1,\ldots, d_n).$ Letting $\Sigma_{\mathbf{d}}=\mathrm{diag}\{1/\Lambda_{d_1},\ldots,1/\Lambda_{d_n}\}$ denote the diagonal matrix with elements the variances indicated from the clustering variables, the model is equivalent to a heteroskedastic Gaussian Process regression model \citep{ki2006gaussian}
            \begin{align}
                &  \mathbf{y} \mid \mathbf{y}^{\rho}, \Sigma_{\mathbf{d}}, \theta, \Lambda \ind \dnormal(\mathbf{y}\mid g(\mathbf{y}^{\rho},\theta), \Sigma_{\mathbf{d}}) \nonumber\\
                & g \sim \mathrm{GP}(\boldsymbol{0}, K(y^{\rho}_t, y_{t'}^{\rho
               }))),\nonumber
            \end{align}
            where $\mathbf{y},\mathbf{y}^{\rho}$ are the vectors of outputs (observations) and inputs (lagged versions) respectively. The vector $\theta$ now represents kernel hyperparameters which can be estimated from data either via MCMC or using some gradient based optimization. 
            
            At the infinite limit $N_1\to \infty$, the hierarchical representation of the model reads
            \begin{align}
                &    y_{t}\mid y^{\rho}_t, d_t=k, \theta, \Lambda \ind \dnormal(y_t\mid g(y^{\rho}_t,\theta), \Lambda_k^{-1}) \nonumber\\
                & g(\cdot, \theta) \sim \mathrm{GP}(\boldsymbol{0}, K(y^{\rho}_t, y^{\rho
               }_{t'})),\quad \theta\sim p(\theta)\nonumber\\
                & d_t = k \mid R_t = r \ind \mathrm{DiscreteUnif}\{1,\ldots,r\}\nonumber\\
                & R_t \iid \mathrm{Nbin}(2,\phi),\quad \phi\sim \dbeta(a_\phi,b_\phi)\nonumber\\
        		&\pi_k = \phi(1-\phi)^{k-1}, \quad \Lambda_k \iid P_0,\quad 1\leq k \leq R^*.\nonumber
            \end{align}
        The hierarchical representation above, is the same representation for a Gaussian process autoregressive model with the Gaussian likelihood replaced by a nonparametric likelihood based on the GSB process.
        \end{proof}
        
\section{Posterior inference via MCMC} \label{sec:sampler}

    In this section we introduce a MCMC method for posterior inference with the proposed model. In particular, a Gibbs kernel is provided and all the variables are sampled from their full conditional distributions. However, as is common, we sample the weights via HMC \citep{neal2012bayesian} within the Gibbs procedure. Additionally, using the sampled values for the parameters $\theta$ we extend the model for prediction of future unobserved values. For simplicity, from now on we consider scalar observations and consequently, $\Lambda$ stands for precision instead of precision matrix.

    \subsection{System identification via MCMC}\label{subsec:mcmc}
        A single sweep of the sampling algorithm has to sample the variables $(d_t, R_t)_{t=1}^n,$ the vector of parameters $\theta,$ as well as the weights and atoms of the random measure $\pi_k, \Lambda_k, 1\leq k\leq R^*.$ For density estimation of the noise process, the sampler must additionally swipe over the noise predictive distribution
        \begin{equation}
            \Pi(\dd z_{n+1}\mid z_{1:n}) = \int \Pi(\dd z_{n+1}\mid \PP)\, \Pi(\dd \PP\mid z_{1:n}). \label{eq:noise-predictive}
        \end{equation}
    
        \medskip\noindent{\bf 1.} First, for $k=1,\ldots, R^*$ we construct the geometric weights
        \begin{equation}\label{eq:sample-weights}
            \pi_k = \phi (1 - \phi)^{k-1}.
        \end{equation}
    
        \medskip\noindent{\bf 2.} Then we sample the atoms of the random measure from their posterior distribution. That is, we sample
        \begin{equation}\label{eq:sample-atoms}
            f(\Lambda_k \mid \cdots) \propto p_0(\Lambda_k)\prod_{d_t=k} \dnormal\left(y_t\,\left.\right | \, g(y^{\rho}_t,\theta), \Lambda_k^{-1}\right),
        \end{equation}
        for $1\leq k\leq R^*.$
        Here, $p_0(\cdot)$ is the density of $P_0$ with respect to the Lebesgue measure.
        
        \medskip\noindent{\bf 3.} Having updated the atoms we proceed to the sampling of the clusters. For $t=1,\ldots,n$ the clustering variable is a sample from the discrete distribution
        \begin{align}\label{eq:sample-clusters}
            &\mathrm{pr}(d_t=k\mid R_t = r,\cdots)\propto \pi_k \,\dnormal\left(y_t\mid g(y^{\rho}_t,\theta),\Lambda^{-1}_k\right)\,\indic(k\leq r).
        \end{align}
    
        \medskip\noindent{\bf 4.} Given the clustering variables, the $R_t$'s which make the sum finite have a truncated geometric distribution
        \begin{equation}
            \mathrm{pr}(R_t=r\mid d_t=k) \propto (1-\phi)^{r-1}\indic(k\leq r).\label{eq:sample-slice}
        \end{equation}
    
        \medskip\noindent{\bf 5.} Under the $\dbeta(a_\phi,b_\phi)$ prior for the geometric probability it's full conditional distribution is 
        \begin{equation}
            f(\phi\mid\cdots)\propto \phi^{a_\phi+ 2n-1}(1-\phi)^{b_\phi+\sum_{t=1}^n R_t-n-1}\label{eq:sample-geometric-probability},
        \end{equation}
        a Beta distribution $\dbeta(a_\phi+2n, b_\phi + \sum_{t=1}^n R_t - n).$
    
        \medskip\noindent{\bf 6.} Now our attention goes to the sampling of the parameters $\theta$ of the neural network.  The weights and biases in each layer $l,\,1\leq l \leq L$ are assigned independent Gaussian priors parametrized by precisions $\tau^{(l)}, 1\leq l \leq N_l,$ that is, $f(\theta^{(l)}\mid \tau^{(l)}, \, 1\leq l \leq L) = \dnormal(\theta^{(l)}\mid 0, [\tau^{(l)}]^{-1})$ where $\theta^{(l)}=\{W^{(l)},b^{(l)}\}$ is the collection of parameters on layer $l.$ Thus, the full conditional distribution for the $\theta$ vector is given by
        \begin{equation}
            f(\theta^{(l)}\mid\cdots) \propto \dnormal(\theta^{(l)} \mid 0, [\tau^{(l)}]^{-1})\prod_{t=1}^n \dnormal\left(y_t\,\big|\,g(y^{\rho}_t,\theta), \Lambda_{d_t}^{-1}\right).\label{eq:sample-theta}
        \end{equation}
        This is a nonstandard distribution because the nonlinear function on $\theta$ appearing in the product of Gaussians. We sample this distribution via a Hamiltonian Monte Carlo step within the Gibbs kernel as described in Section~\ref{subsec:bnn}.
    
        \medskip\noindent{\bf 7.} Under the Gamma prior for the hyperparameters $\tau_\bullet^{(l)},$ we can update the precisions tied to the prior for each layer by sampling from the semi-conjugate model
        \begin{equation}
            f(\tau_\bullet^{(l)}\mid \cdots) \propto \mathrm{Ga}(\tau_{\bullet}^{(l)}\mid \alpha_{\tau_{\bullet}^{(l)}},\beta_{\tau_{\bullet}^{(l)}})\, \dnormal(\theta^{(l)}\mid 0, [\tau_\bullet^{(l)}]^{-1}),\quad 1\leq l \leq L,
        \end{equation}
        which is a Gamma distribution easy to sample from. 
    
        \medskip\noindent{\bf 8.} Finally, we take a sample from the noise predictive given in \eqref{eq:noise-predictive}. At each iteration we have weights and locations $(\pi_k,\Lambda_k)$ and we sample a uniform random variable $u\sim \mathrm{U}(0,1).$ We take $\Lambda_k$ for which
        \begin{align}
            \sum_{\ell=1}^{k-1} \pi_\ell < u \leq \sum_{\ell=1}^{k}\pi_\ell.\label{eq:sample-predictive}
        \end{align}
        If we run out of weights we sample $\Lambda_k$ from the prior $P_0(\Lambda_k).$ Eventually, a sample from the noise predictive is taken from $N(0, \Lambda_k^{-1}).$
        
    \subsection{Prediction of ${\cal J}$ future values}\label{sec:prediction}
        Under the assumption that the order of the Markovian process responsible for the generation of data is known, it is possible to make predictions for the future unobserved values. In particular, predicting the values in a finite prediction horizon ${\cal J}>0,$ requires to sample from the conditional distribution of each future unobserved value. Due to the Markovian structure imposed in \eqref{eq:recurrence_eq0}, the conditional distribution for future unobserved $1\leq j \leq {\cal J}$ values is given by
        \begin{align}
            &f(y_{n+j}, R_{n+j}=r, d_{n+j}=k) = f_R(r \mid p)\,\frac{1}{r}\, \indic(r\geq k)\, \dnormal(y_{n+j}\mid g(y^{\rho}_{n+j},\theta), \Lambda_k^{-1}) \label{eq:predictions-density}
        \end{align}
        where now, the future unobserved values $y_{n+1},\ldots, y_{n+{\cal J}}$ have associated auxiliary variables $R_{n+j}, d_{n+j}$ for $1\leq j \leq {\cal J}.$ This however, results in distributions with very large variance making prediction essentially infeasible. The reason for this is that under the nonparametric noise assumption, future data will with high probability enable new clusters that don't include any observations from the training data. Then the precision $\Lambda_k$ in \eqref{eq:predictions-density} is a sample from the prior which in a non-informative setting is a Gamma distribution with typically large variance. 
    
        It is possible though to use the samples obtained from the MCMC described in Section~\ref{subsec:mcmc} in order to compute functions $\hat{g}$ in a Monte-Carlo fashion. In particular, at each step of the MCMC we have samples $\theta^{(i)}$ for $i=1,\ldots, \mathcal{I}$ where $\mathcal{I}$ is the total number of iterations. We can plug in these samples in the neural network and make predictions for the future ${\cal J}$ values as
        \begin{align}
            \hat{y}_{n+j}^{(i)} &= g(y^{\rho}_{n+j};\theta^{(i)}),\quad j=1,\ldots,{\cal J},\label{eq:first-pred}
        \end{align}
         where
        \begin{align}
           y^{\rho}_{n+j}:=(\hat{y}_{n+j-1}^{(i)},\ldots,\hat{y}_{n+1}^{(i)},y_n,\ldots, y_{n-\rho+1}).\nonumber
        \end{align}
        A biased estimate can be obtained by averaging over the sample of computed functions as
        \begin{equation}
            \hat{y}_{n+j} = \frac{1}{\mathcal{I}} \sum_{i=1}^{\mathcal{I}} g(y^{\rho}_{n+j};\theta^{(i)}),\quad j=1,\ldots,{\cal J}.
        \end{equation}
        The associated Monte-Carlo error can be computed from the sample variance of predicted values. It is worth noting here that although these predictions are biased, they produce better results with less variance than the predictions using Equation~\eqref{eq:predictions-density}. Similar predictors have been used before in \cite{liang2005bayesian}.
        
\section{Experiments} \label{sec:experiments}

 In this section we test the performance of the proposed methodology in simulated and real time series under the assumption of a known order for the underlying Markov process. For the simulated example we use dynamically noise-corrupted time series generated from the logistic map. As real data, we use the Canadian Lynx data, quite popular in time series literature. 
 
\subsection{Experimental setting}

In all the experiments below we adopt the same network architecture and choose same prior specifications. The neural network has the structure $(N_0, 10, 1)$ with $\tanh$ activation functions in the hidden layer, where the number of inputs $N_0$ will be specified by the lag in each example. As a prior for the geometric probability of the GSB process we assign a uniform $\mathrm{Be}(\phi\mid 1,1)$ distribution. The precisions $\tau^{(l)}_\bullet$ for each layer's $1\leq l \leq L$ weights and biases
    are having a $\mathrm{Ga}(\tau^{(l)}_\bullet\mid 5, 5)$ Gamma hyper-prior assigned. These choices for the hyperparameters will moderate the precisions of the Normal priors used for the weights of the network \citep{liang2005bayesian}. The base measure of the GSB process is a Gamma measure $\mathrm{Ga}(\dd \Lambda\mid a_\Lambda,b_\Lambda)\propto \Lambda^{a_\Lambda-1}\exp{(-b_\Lambda \Lambda)}\dd \Lambda$ whose parameters will be specified in each example separately along with HMC tuning parameters $(\epsilon,S).$  All MCMC samplers ran for $40,000$ iterations where the first $2,000$ samples are discarded as burn-in period and a sample is kept (thinning) after 50 iterations.

    Depending on each example's nature, we perform a comparison with related models in terms of prediction accuracy. Letting $\hat{y}$ be the predicted values, we compare the models under the following metrics:
    \begin{enumerate}
        \item Mean Square Error (MSE): $\sum_{t=n+1}^{n+{\cal J}} (\hat{y}_t-y_t)^2 / {\cal J}.$
        \item Root Mean Square Error, the square root of MSE.
        \item Mean Absolute Error (MAE): $\sum_{t=n+1}^{n+{\cal J}} |\hat{y}_t-y_t| / {\cal J}.$
        \item Mean Absolute Prediction Error (MAPE) defined as:
            $\frac{1}{{\cal J}}\sum_{t=n+1}^{n+{\cal J}} \left|\frac{\hat{y}_t-y_t}{y_t}\right|$
        \item Theil's $U$ statistic \citep{pindyck1998economentric}: $U = \frac{\left(\frac{1}{{\cal J}}\sum_{t=n+1}^{n+{\cal J}} (\hat{y}_t-y_t)^2 \right)^{1/2}}{\left(\frac{1}{{\cal J}}\sum_{t=n+1}^{n+{\cal J}} \hat{y}_t^2 \right)^{1/2}+ \left(\frac{1}{{\cal J}}\sum_{t=n+1}^{n+{\cal J}} y_t^2 \right)^{1/2}}.$
        Theil's statistic $U$ takes values in $[0,1]$ predictions such that $U=0$ means perfect prediction.
    \end{enumerate}
    \subsection{Noisy logistic map}
        In this section, we consider dynamically noise corrupted time series generated from the noisy logistic map whose deterministic part is given by $T(x;\mu) = 1 - \mu x^2$
        for the value of $\mu=1.71,$ for which the system exhibits chaotic behavior. The dynamical noise process is described from the 2-component mixture of normal kernels given by
        \begin{equation}
                f(z) = \frac{1}{3}\dnormal(z\mid 0, \sigma_1^2) +     \frac{2}{3}\dnormal(z\mid 0, \sigma_2^2) \label{eq:logistic-noise} 
        \end{equation}
        with $\sigma_1 = 0.04,\,\sigma_2 = 10^{-4}.$ We have generated $n=210$ observations from the random recurrence
\begin{equation}
       x_t = T(x_{t-1}; \mu) + z_t, \quad z_t\iid f(z), \quad 1 \leq t \leq n,\label{eq:random-logistic}
\end{equation}
where the first $200$ observations are used for training and the last $10$ observations have been held out to test the prediction capability of our model. 
        
Because the observations have been generated from the recurrence given in  Equation~\eqref{eq:random-logistic}, the input layer consists of $N_0=1$ units. The parameters of the gamma base measure of the GSB process is chosen according to \citet{merkatas2017bayesian} to be a $\mathrm{Ga}(a_\Lambda, b_\Lambda)$ with $a_\Lambda=3,\, b_\Lambda=0.001.$ Under those values the model becomes data hunting and enables a large number of clusters which leads to a more detailed description of the true noise process. For the HMC tuning parameters we set $(\epsilon, S) = (0.0015, 3).$
        
For comparison we use the geometric stick breaking reconstruction model (GSBR) \citep{merkatas2017bayesian} and an auto-regressive Bayesian neural network (AR-BNN) with Gaussian noise \citep{nakada2005bayesian}. These two methods are specifically tailored for reconstruction and prediction of nonlinear and chaotic dynamical systems and essentially are the building blocks of our model. In particular, GSBR model assumes a polynomial functional form $g(x;
        \theta) = \sum_{p=0}^{P} \theta_p x^{p}$ where $\theta:=(\theta_0,\ldots,\theta_P)$ are the coefficients of the modeling polynomial and the density of the noise components is modeled as a GSB mixture. The AR-BNN is essentially a special case of our model when, roughly speaking, the infinite mixture collapses to a singular mixture of 1-Gaussian distribution.
        
        The performance of all models in terms of the evaluation metrics defined before is given in Table~
        \ref{tab:logistic-compare}. Prediction of chaotic time series is a difficult task itself however, we can see that NP-BNN outperforms all the other models, a more or less anticipated result. On one hand, GSBR does predictions in a similar manner as in Equation~\eqref{eq:predictions-density} which comes with the drawbacks discussed in Section~\ref{sec:prediction}. On the other hand, the AR-BNN would perform well in principle however, the modeling of the noise component with a Gaussian distribution leads to overestimated variance (see Figure~\ref{fig:logistic-noise}) which breaks the quality of the predictions.  
        \begin{table}[H]
        \begin{center}
        \caption{Comparison of the NP-BNN, GSBR and AR-BNN models in terms of prediction accuracy on 10 out-of-sample future values of the chaotic logistic time series data. NP-BNN outperforms the other models under all metrics.}\label{tab:logistic-compare}
        \begin{tabular}{l l l l }
            \noalign{\smallskip}\hline
            Method & NP-BNN & GSBR & AR-BNN  \\
            \hline
            MSE & $0.1802$ & $0.4324$ & $0.2719$ \\
            RMSE & $0.4245$ & $0.6575$ &$0.5214$ \\
            MAE & $0.3503$ & $0.5630$ & $0.4853$ \\
            MAPE (\%) & $102.3866$ & $110.6113$ & $123.5276$ \\
            Theil's $U$ & $0.3578$&  $0.6150$ & $0.5161$  \\
        \noalign{\smallskip}\hline
        \end{tabular}
        \end{center}
        \end{table}
        \begin{figure}[H]
            \begin{center}
            \includegraphics[width=0.6\textwidth]{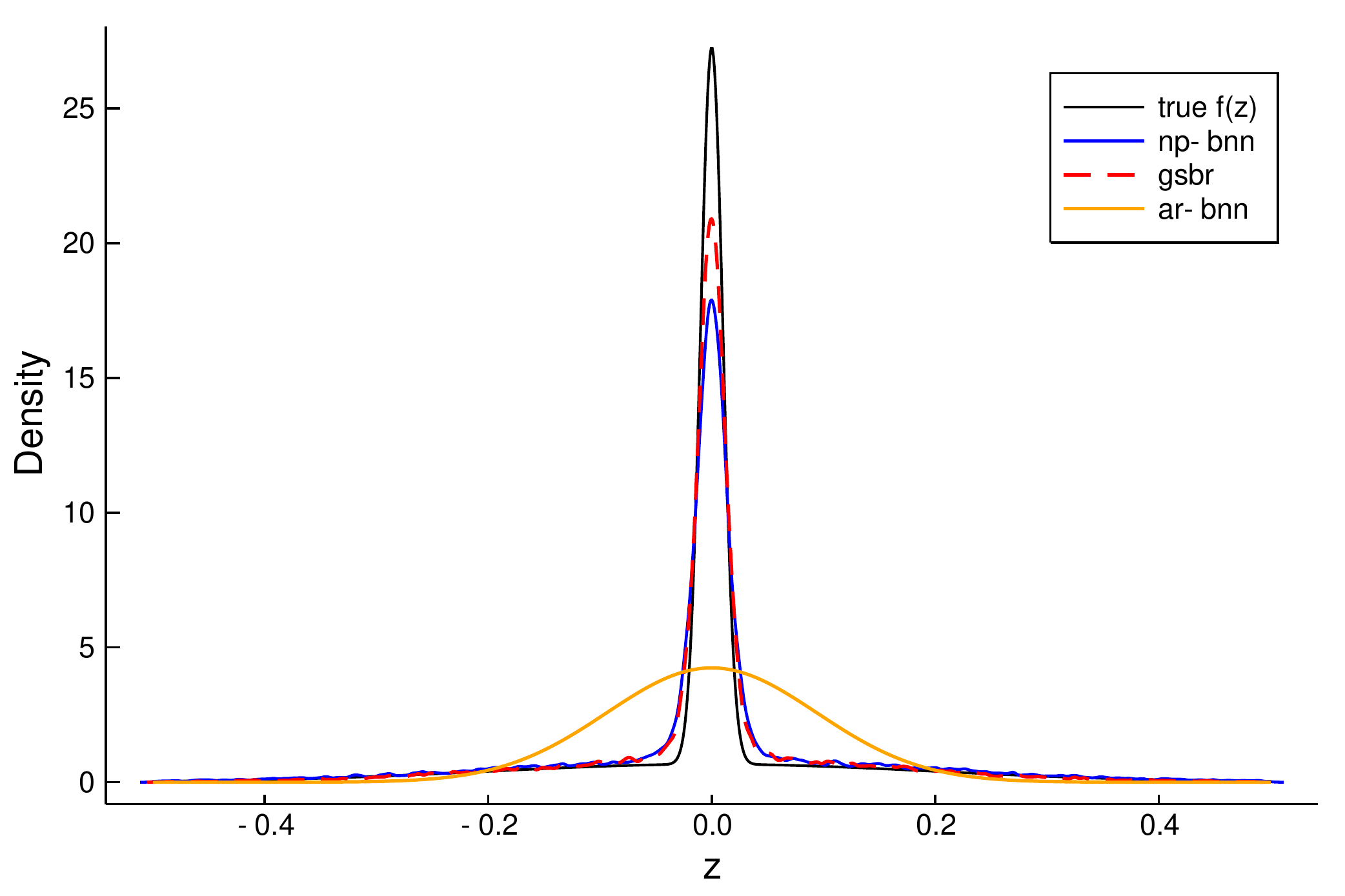}
            \end{center}
            \caption{Kernel density estimator for the posterior predictive density based on the MCMC samples taken from NP-BNN (blue solid line), GSBR (red dashed line) and AR-BNN (orange solid line). The true density of the noise give in Equation~\eqref{eq:logistic-noise} is superimposed with black color.}
            \label{fig:logistic-noise}
        \end{figure}
        
        In Figure~\ref{fig:logistic-noise} we present the density of the noise component, that is, the 2-normal mixture given in Equation~\eqref{eq:logistic-noise} (black solid line). Together we superimpose kernel density estimators (KDE) based on the posterior noise-predictive samples obtained from the NP-BNN (blue solid curve) and the GSBR model \citep{merkatas2017bayesian} (red solid line). The density of a Gaussian distribution with precision estimated from the AR-BNN \citep{nakada2005bayesian} with Gaussian noise is plotted in orange. For this heavy tailed distribution, as expected, models with nonparametric modeling of the noise components provide estimators nearly indistinguishable and very close to the true density. In contrast, the AR-BNN model overestimates the variance in order to capture the heavy tails and fails to estimate the intense peak of the mixture.

        \begin{figure}[H]
            \centering
            \begin{subfigure}[H]{0.45\textwidth}
                \centering \includegraphics[width=\textwidth]{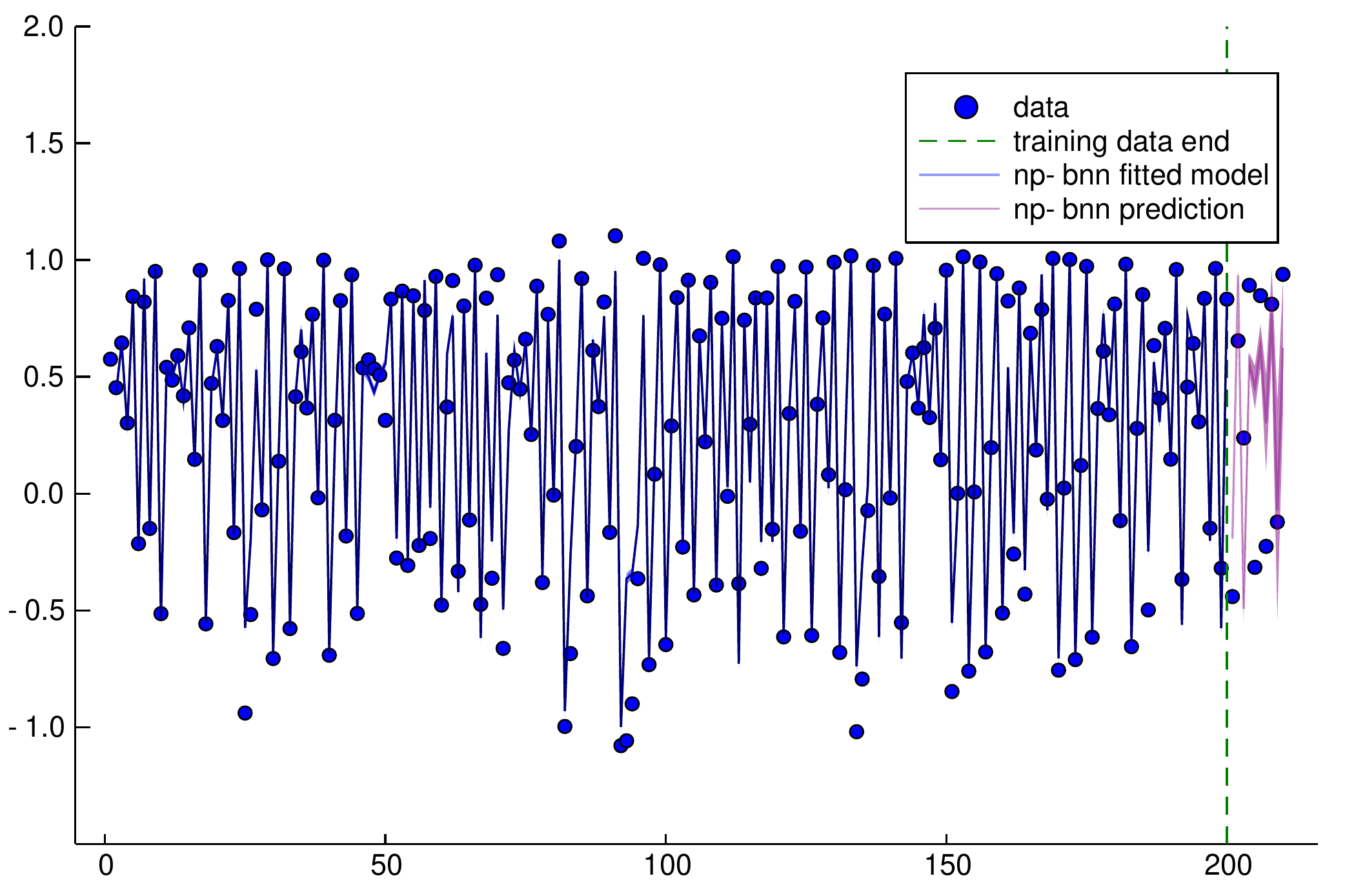}
                \caption{}
                \label{fig:logistic-identification}
            \end{subfigure}
            \begin{subfigure}[H]{0.45\textwidth}
                \centering
                \includegraphics[width=\textwidth]{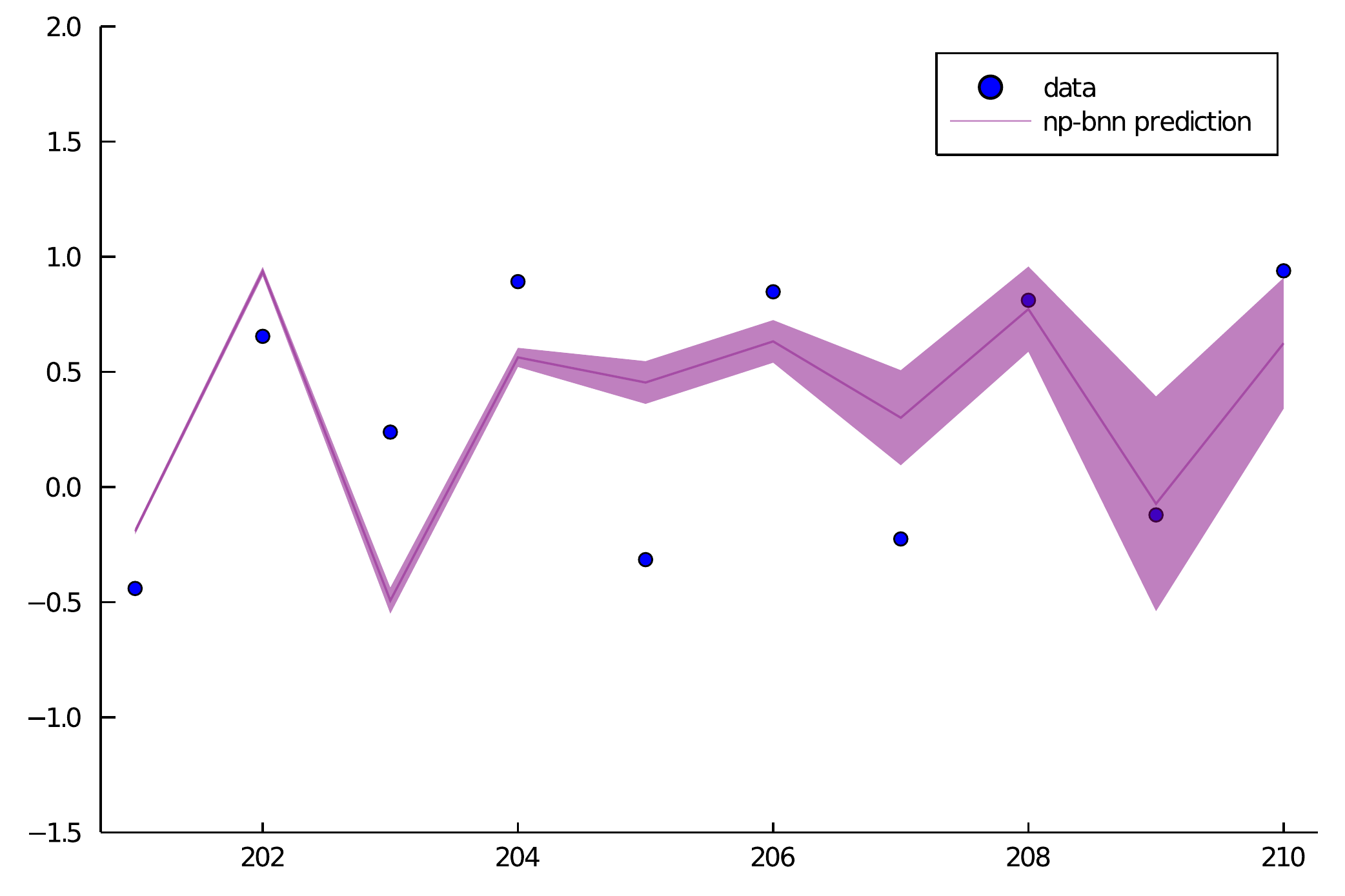}
                \caption{}
                \label{fig:logistic-zoomid}
            \end{subfigure}
            \caption{True time series generated from the noisy logistic map (blue dots) and the system identified from the proposed method (blue line) are shown in panel \ref{fig:logistic-identification}. The shaded lines represent sampling variability due to MCMC sampling. Additionally we present the predicted values for the future 10 values in purple. Panel \ref{fig:logistic-zoomid} is a magnification of the prediction in time steps 201 to 210.} 
            \label{fig:two-id}
        \end{figure}
        
    \subsection{Real data: Canadian Lynx}
        As a real data example we take the Canadian Lynx data which consists of the annual trappings of lynx in the Mackenzie River District of North–West Canada for the period from 1821 to 1934 resulting to a total of 114 observations. We aim to identify the process responsible for the generation of the data using the first 100 observations and, additionally, test our model's predictive capability on the last 14 observations.
        
        As is common, we take the $\log_{10}$ transformation of the dataset. This data set has been widely used as a benchmark for nonlinear time series estimation and prediction before \citep{chan2001chaos,liang2005bayesian} and can be easily accessed from the web. The choice of lag is a model selection problem typically addressed using some information criterion like the Akaike information criterion (AIC) \citep{akaike1974new}. For an AR model, AIC chooses the best lag to be $\rho=11$ but, as in \cite{liang2005bayesian}, we believe that this might lead to overfitting and we assume lag $\rho=2.$ 
        
        For this example, we set the number of inputs in the neural net $N_0=2$. This corresponds to a autoregressive model with $\rho=2.$ Similar approach was taken also in \cite{liang2005bayesian}. The base measure of the GSB process is chosen to be a gamma $\mathrm{Ga}(a_\Lambda, b_\Lambda)$ with $a= b=0.05.$ Here, we set $(\epsilon, S)=(0.005, 20).$
        
        The GSBR model is inappropriate for modeling real data, at least without first extending it to include the coefficients of a full quadratic model \citep{chan2001chaos} so it is excluded from the comparison for fairness. We add in the comparison an AR model for which the best lag is $\rho=11$ according to AIC, and a feed forward neural network trained with backpropagation using the same number of hidden neurons and lagged values in the input layer. That is we choose 10 neurons and $\rho=2$ for straightforward comparisons.  
        
        In Figure~\ref{fig:lynx-clusters} we present the ergodic means for the number of clusters estimated during the fitting process from which we conclude that the algorithm is mixing well and converges to the stationary distribution.
        
        \begin{figure}[H]
            \begin{center}               
            \includegraphics[width=0.6\textwidth]{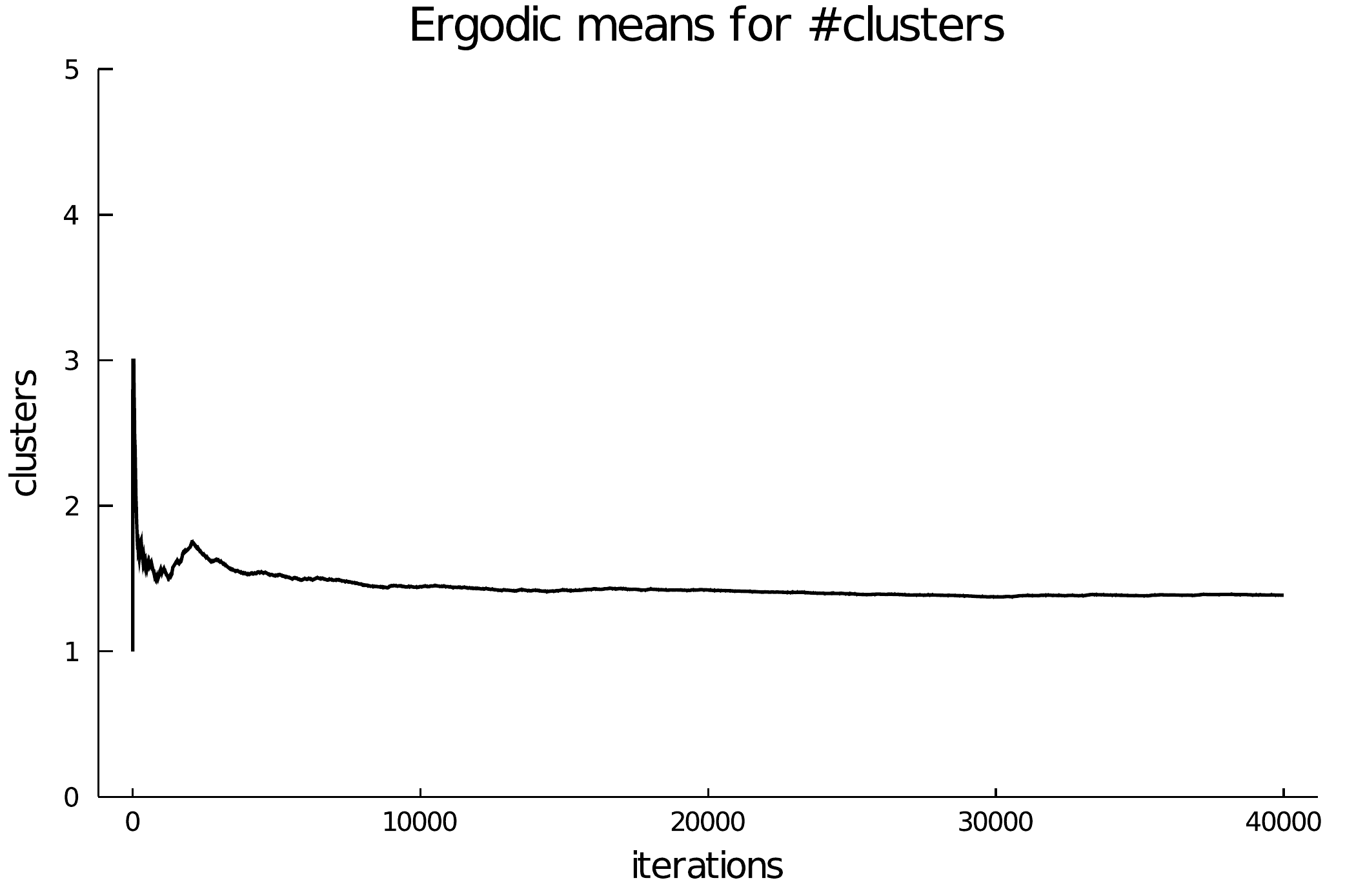}
            \end{center}
            \caption{Ergodic means for the number of the active clusters during fitting.}
            \label{fig:lynx-clusters}
        \end{figure}
        
        In Figure~\ref{fig:lynx-fit-predictions} we represent the model fitted from the proposed method NP-BNN and the out of sample predictions for the next 14 values. The NP-BNN model outperforms all models in this comparison study with additional advantage that it can be used without any hypotheses about linearity of the time series or Gaussianity of its distribution. A comparison in terms of the evaluation metrics introduced in the beginning of this section is given in Table~\ref{tab:lynx-compare}. 
    
        \begin{figure}[H]
            \begin{center}
            \includegraphics[width=0.65\textwidth]{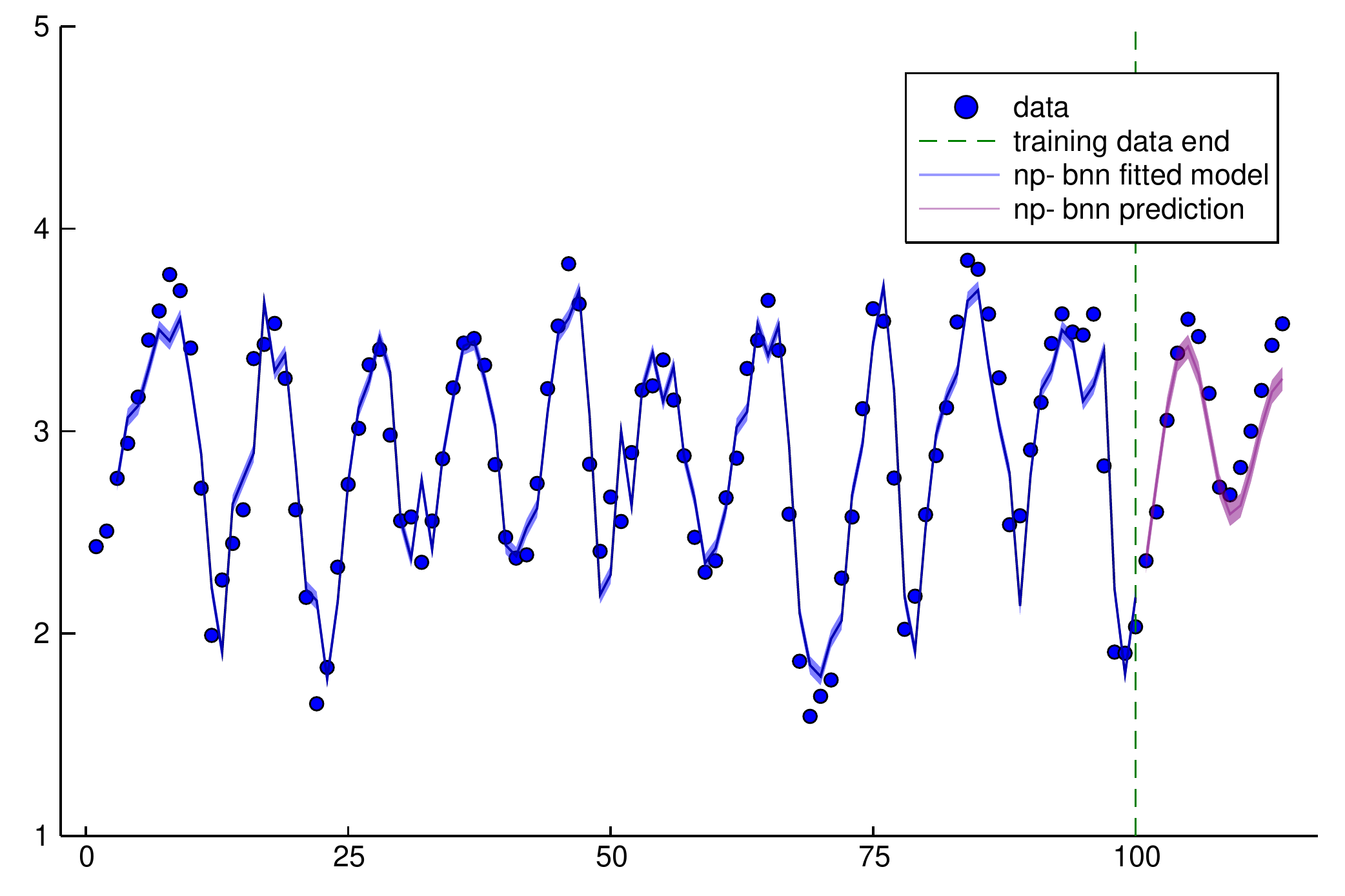}
            \end{center}
            \caption{Fitted time series and 14-step ahead predictions for the Canadian Lynx data (blue dots). The blue line is the mean posterior function computed by the Bayesian neural network and the shaded blue lines represent the uncertainty associated with the function values based on sampled functions. In purple we represent the mean posterior prediction function for the 14 future observations with their associated uncertainties (purple-shaded lines).}
            \label{fig:lynx-fit-predictions}
        \end{figure}
        
        \begin{table}[H]
        \begin{center}
        \caption{Evaluation metrics on Canadian Lynx data based on the prediction for the 14 future values. NP-BNN outperforms the models in the comparisons. For each model, the number in parentheses denotes the lag used for estimation.}\label{tab:lynx-compare}
        \addtolength{\tabcolsep}{-3pt}
        \begin{tabular}{l l l l l l }
            \noalign{\smallskip}\hline
            Method & NP-BNN(2) & AR-BNN(2) & ANN(2) & AR(11) \\
            \hline
            MSE & 0.0249 & 0.0897 & 0.0371 & 0.0822  \\
            RMSE & 0.1579 & 0.2994 & 0.1926 & 0.2866 \\
            MAE & 0.1350 & 0.2273 & 0.1487 &  0.2374  \\
            MAPE (\%) & 4.2727 & 6.9048 & 4.920 & 7.995 \\
            Theil's $U$ & 0.0260 & 0.0502 & 0.0318 & 0.0476 \\
        \noalign{\smallskip}\hline
        \end{tabular}
        \end{center}
        \end{table}

\section{Discussion} \label{sec:discussion}
    We have proposed an auto-regressive Bayesian neural network which assumes nonparametric noise process (NP-BNN) for the estimation and prediction of nonlinear time series models based on the assumption that the order of the underlying Markovian process is known.
    An MCMC sampler for posterior inference  applicable in data sets with small sample size for which the noise process might depart from normality has been devised.
    In the limit of infinite neurons in the hidden layers, we have shown that this model converges to fully nonparametric Bayesian model enjoying all the flexibility of Gaussian process regression models without however, the need of matrix inversions, a characteristic that makes scaling of GPs difficult. Numerical results on simulated and real data sets indicate that the proposed model gives results comparable to the more traditional methods used for nonlinear time series modeling.
    
    During modeling of dynamical phenomena, it is of special interest to estimate the initial state of the system additionally to the system parameters. It would be interesting to modify the proposed methodology so that a full conditional distribution for the initial unknown states when the process responsible for the generation of the time series is non-reversible. This however would require to invert the system $g$ which is not always feasible when $g$ is a neural network. To see this, consider for simplicity a prior $p(y_0)=\indic(-\zeta \leq y_0\leq \zeta) $ for the initial condition $y_0$ and for simplicity, a Gaussian noise model. Then under the proposed framework, the full conditional for $y_0$ reads
    \begin{equation*}
        p(y_0\mid\cdots) \propto p(y_0)\,\dnormal(y_1 \mid g(x_0,\theta), \Lambda^{-1}).
    \end{equation*}
    To devise an exact algorithm based on Gibbs sampling, auxiliary variables can be introduced \citep{damlen1999gibbs} such that 
    \begin{align}
        p(y_0,v_0\mid\cdots) &\propto \indic(-\zeta \leq y_0\leq \zeta) \,\exp\left\{-\frac{\Lambda}{2} v_0^2\right\}  \,\indic(v_0\geq \left(y_1 -g(y_0,\theta))^2\right).\nonumber
    \end{align}
    The full conditional for the auxiliary $v_0$ is clearly a truncated exponential distribution. For $y_0,$ one has to resort on solving the inequalities 
    \begin{align}
        &p(y_0\mid v_0 \cdots)\propto\indic(-\zeta \leq y_0\leq \zeta)\,\indic(y_1-\sqrt{v_0}\leq g(y_0,\theta)\leq y_1+\sqrt{v_0}).\nonumber
    \end{align}
    Letting 
    \begin{equation*}
        Z^{-} = \min\left\{-\zeta, g^{-1}\left(y_1-\sqrt{v_0},\theta\right)\right\},
    \end{equation*}
    and
    \begin{equation*}
    Z^+=\max\left\{\zeta, g^{-1}\left(y_1+\sqrt{v_0},\theta\right)\right\},
    \end{equation*}
    the full conditional distribution for $x_0$ is given by
    \begin{align}
        p(y_0\mid v_0 \cdots) = \mathrm{U}\left(y_0\mid Z^{-},Z^{+}\right),\nonumber
    \end{align}
    where $U$ denotes the uniform density.
    In this case a modeling approach would be to utilize normalizing flows as a model for $g$ which form a class of invertible neural networks \citep{rezende2015variational}.

\section*{Acknowledgements}
The authors would like to thank Academy of Finland (project 321891) for funding the research.

\bibliographystyle{apa}
\bibliography{refs}
\end{document}